\newtheorem{theorem}{Theorem}
    \newtheorem{lemma}[theorem]{Lemma}
\theoremstyle{definition}
    \newtheorem{definition}[theorem]{Definition}
\newcommand{\Exp}[1]{\mathrm{e}^{#1}}
\newcommand{\nn}{\mathbb{N}}
\newcommand{\cA}{\mathcal{A}}
\newcommand{\Sc}{h^{\textnormal{c}}}
\newcommand{\Sr}{d^{\textnormal{KL}}}
	\providecommand*{\diff}%
	{\@ifnextchar^{\DIfF}{\DIfF^{}}}
	\def\DIfF^#1{%
	\mathop{\mathrm{\mathstrut d}}%
	\nolimits^{#1}\gobblespace}
	\def\gobblespace{%
	\futurelet\diffarg\opspace}
	\def\opspace{%
	\let\DiffSpace\!%
	\ifx\diffarg(%
	\let\DiffSpace\relax
	\else
	\ifx\diffarg%
	\let\DiffSpace\relax
	\else
	\ifx\diffarg\{%
	\let\DiffSpace\relax
	\fi\fi\fi\DiffSpace}
\newcommand{\dd}{\diff}
\renewcommand{\P}{\mathbb{P}}
\newcommand{\Q}{\mathbb{Q}}
\newcommand{\X}{\mathbf{X}}
\newcommand{\Y}{\mathbf{Y}}
\newcommand{\PX}{\P_{\X}}
\newcommand{\PY}{\P_{\Y}}
\def\BibTeX{{\rm B\kern-.05em{\sc i\kern-.025em b}\kern-.08em
    T\kern-.1667em\lower.7ex\hbox{E}\kern-.125emX}}
\begin{document}

\title{Ziv--Merhav Estimation for Hidden-Markov Processes}


\author{N.\ Barnfield\footnote{McGill University, Department of Mathematics and Statistics, Montr\'{e}al QC, Canada}\ ,  R.\ Grondin\textsuperscript{\raisebox{-.5ex}{*}}, G.\ Pozzoli\footnote{CY Cergy Paris Universit\'e, Department of Mathematics, Cergy-Pontoise, France} \hspace{.25ex} and R.\ Raqu\'epas\footnote{New York University, Courant Institute of Mathematical Sciences, New York NY, United States}}
\date{}

\maketitle

\paragraph*{Abstract}
    We present a proof of strong consistency of a Ziv--Merhav-type estimator of the cross entropy rate for pairs of hidden-Markov processes. Our proof strategy has two novel aspects: the focus on decoupling properties of the laws and the use of tools from the thermodynamic formalism.\\

    \noindent\textit{Presented at the IEEE International Symposium on Information Theory 2024, in Athens, Greece.}



\section{Introduction}
We are interested in Ziv--Merhav-type estimators~\cite{MZ93} of the cross entropy rate 
for pairs of sources\,---\,which is a sum of a Kullback--Leibler (KL) divergence rate and an entropy rate. While such estimators are widely used in practice (see {\em e.g.}~\cite{KPK01,BCL02,CF05,B+08,CFF10,RP12,LMDEC19,R+22}), theoretical works on the subject are scarce. The goal of this proceedings paper is to concretely present the consequences of recent findings~\cite{BCJPPExamples,CDEJR23w,CR23,BGPR,BGPRb} for hidden-Markov processes in an accessible way.

Throughout, $\cA$ is a finite alphabet and $\cA^\nn$ is the space of one-sided $\cA$-valued sequences, denoted by bold lower-case letters, {\em e.g.}~$\mathbf{x}=(x_n)_{n=1}^\infty$. We use $x_k^\ell$ for the string ({\em i.e.}\ finite concatenation of elements of~$\cA$) of length $\ell-k+1$ starting at index~$k$ in such a sequence~$\bf{x}$. We also use $[x_1^l] := \{\mathbf{z} \in \cA^{\nn}: z_1^l = x_1^l\}$. We use {bold upper-case letters} for $\cA$-valued processes, {\em e.g.}~$\X = (X_n)_{n=1}^{\infty}$. We will only consider processes that are stationary, {\em i.e.}\ processes whose law (a measure on~$\cA^{\nn}$) is invariant under the shift $ (x_{n})_{n=1}^\infty \mapsto (x_{n+1})_{n=1}^\infty$.

\subsection{Entropies}
\label{ssec:entropies}

The \emph{Shannon entropy rate} of a stationary process~$\X$\,---\,or equivalently of its law $\PX$\,---\,is the limit
\[
    h(\PX) = \lim_{n\to\infty} \frac{\sum_{a\in\cA^n} \PX[a] (-\ln\PX[a])}{n}.
\]
There are at least two important related quantities for pairs of stationary processes:
the
\emph{cross entropy rate} and the \emph{KL divergence rate} (a.k.a.\ relative entropy rate): 
\[
    \Sc(\PY|\PX) = \lim_{n\to\infty} \frac{\sum_{a\in\cA^n} \PY[a] (-\ln\PX[a])}{n}
\]
whenever the limit exists in~$[0,\infty]$, and then 
\(
    \Sr(\PY|\PX) = \Sc(\PY|\PX) - h(\PY).
\)
As we shall see in Section~\ref{ssec:dec}, this will be the case for the pairs of measures we are interested in, but it should be noted that this is not a general fact about pairs of ergodic processes.%
    \footnote{\color{black} Let us emphasize that the KL divergence of~$\PY$ with respect to $\PX$ itself\,---\,not the rate\,---\,is well defined but too coarse to be useful in many tasks. Indeed, if both measures are ergodic, then this quantity is either zero or infinite.
    } 

The KL divergence rate plays a fundamental role in many tasks in information theory and statistics: binary hypothesis testing, universal classification, {\em etc}. Because the KL divergence rate differs from the cross entropy rate by an entropy rate which can be universally estimated following the Lempel--Ziv parsing algorithm~\cite{LZ78}, we will focus on estimation of the cross entropy rate.

\subsection{Ziv--Merhav-type estimators}
\label{ssec:ZM}

Inspired by the dictionary-based approach of the Lempel--Ziv algorithm for estimating the entropy rate, researchers have sought to develop similar universal estimators that generalize to multiple sources to measure the cross entropy rate.
Introduced in~\cite{MZ93}, the Ziv--Merhav (ZM) estimator is one such cross entropy rate estimator, whose consistency was established in the case where $\X$ and $\Y$ are stationary multi-level Markov processes; we will denote it by $Q_N^{\text{ZM}}$. 
We consider a modification $Q_N$ of $Q_N^{\text{ZM}}$, introduced recently in \cite{BGPRb}, which we refer to as the \emph{modified Ziv--Merhav} (mZM) \emph{estimator}. 

\begin{definition}
\label{def:main}
    For realizations $(\mathbf{y},\mathbf{x})$ of $(\Y,\X)$, the mZM parsing of $y_1^N$ with respect to $x_1^N$ begins by determining the shortest prefix $\overline{w}^{(1,N)}$ of $y_1^N$ that does not appear in $x_1^N$. 
    Then, $\overline{w}^{(2,N)}$ is the shortest prefix of the unparsed part of~$y_1^N$ that does not appear in $x_1^N$, and so on until we reach the end of~$y_1^N$. 
    The \emph{parsing length} $c_N(\mathbf{y},\mathbf{x})$ is the number of words in the parsing
    \[
        y_1^N = \overline{w}^{(1,N)} \overline{w}^{(2,N)} \dotsc \overline{w}^{(c_N,N)};
    \] see Algorithm~\ref{alg:mZM}.
    The \emph{mZM estimator} is
    \[
    Q_N(\mathbf{y}, \mathbf{x}) := \frac{c_N(\mathbf{y}, \mathbf{x}) \ln N}{N - c_N(\mathbf{y}, \mathbf{x})}.
    \]
\end{definition}

In essence, the original ZM algorithm parses~${y}_1^N$ according to the \emph{longest words found} in ${x}_1^N$ whereas the mZM algorithm parses according to the \emph{shortest words not found}: the difference between the ZM and mZM parsing lengths boils down to the choice of 
  imposing or not the condition ``if $i=j$'' for executing Line~7 in Algorithm~\ref{alg:mZM}.
This difference is compensated by the choice of not subtracting or subtracting $c_N$ at the denominator in the definition of the estimator. As seen in Section~\ref{sec:num}, these subtleties yield no observable difference in performance.
\begin{algorithm}
\caption{Computation of the mZM parsing length}
\label{alg:mZM}
\begin{algorithmic}[1]
    \REQUIRE $x_1^N, y_1^N \in \cA^N$ 
    \ENSURE $c_N$
    \STATE $c_N \gets 1$, $j \gets 1$, $i \gets 1$
    \WHILE{$j < N$}
        \IF{$y_{i}^j$ is in $x_1^N$}
        \STATE $j \gets j + 1$
        \ELSE
            \STATE $c_N \gets c_N + 1$
            \STATE $j \gets j +1$
            \STATE $i \gets j$
        \ENDIF
    \ENDWHILE
\end{algorithmic}
\end{algorithm}

In a way, ZM-type estimators make repeated use of the notion of \emph{longest-match length}
\[
\Lambda_N(\mathbf{z},\mathbf{x}) = \sup \{ l : z_1^l = x_k^{k+l-1} \text{ for some } k \leq N - l + 1 \}
\]
for which, under suitable assumptions on $\Y$ and $\X$, we have 
\[ 
    \lim_{N\to\infty} 
    \frac{\ln N}{\Lambda_N(\Y,\X)}
   = \Sc(\PY|\PX)
\]
almost surely; see {\em e.g.}~\cite{WZ89,OW93,Ko98,CDEJR23w}. However, these repeated uses of longest-match lengths are not independent, making it delicate to try to deduce convergence of ZM-type estimators from convergence of longest-match length estimators.

\subsection{Hidden-Markov processes}
\label{ssec:hmm}

An $\cA$-valued process is called a hidden-Markov process (HMP) if all the marginals of its law can be represented as 
\[ 
    \P[x_1^n] = \sum_{s \in \mathcal{S}^n} \pi_{s_1} R_{s_1,x_1} P_{s_1,s_2} R_{s_2,x_2} \dotsb P_{s_{n-1},s_{n}} R_{s_n,x_n}
\]
for some fixed Markov chain $(\pi,P)$ on some state space~$\mathcal{S}$\,---\,this is the ``hidden'' chain\,---\,and some fixed $(\#\mathcal{S})$-by-$(\#\cA)$ stochastic matrix~$R$. Such processes are known under different names, including \emph{probabilistic functions of Markov processes}, and have gained immense popularity in statistical applications since the seminal papers~\cite{BP66,Pe69}; also see the review~\cite{EM02}.

Throughout this paper, we will only consider $\cA$-valued HMPs that can be represented with the following constraints: 
\begin{enumerate}
    \item[i.] the hidden state space is finite;
    \item[ii.] the hidden chain is stationary and irreducible;
    \item[iii.] for each $s_1 \in \mathcal{S}$, there exists $n$ such that $\sum_{t \in \mathcal{S}^{n-1}} \pi_{s_1} R_{s_1,x_1} P_{s_1,t_1} R_{t_1,x_2} \dotsb P_{t_{n-2},t_{n-1}} R_{t_{n-1},x_n}$ is positive for more than one string $x$ of length~$n$.
\end{enumerate}
Condition~ii implies ergodicity and Condition~iii is essentially the minimal condition for 
the process not to be eventually deterministic. 
As far as the class of measures on~$\cA^\nn$ that can be obtained is concerned, it turns out that there is no loss of generality in considering only deterministic functions of (possibly larger) hidden chains. The Shannon entropy rate of such HMPs has been the subject of many works; see {\em e.g.}~\cite{Bl57,Bi62,JSS04,ZDKA06,HM06}.

\subsection{Main result}

Our main result is the following result on strong consistency of the mZM estimator of the cross entropy rate from Definition~\ref{def:main}.  To our knowledge, no analogue of this theorem is available for the ZM estimator.

\begin{theorem}
\label{thm:ieee-main}
    Suppose that $\X$ and $\Y$ are independent, HMPs with respective laws $\PX$ and $\PY$. If they both satisfy Conditions~i--iii, then
    \[ 
        \lim_{N\to\infty} Q_N(\Y,\X) 
        = \Sc(\PY|\PX)
    \]
    almost surely.
\end{theorem}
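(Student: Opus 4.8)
The plan is to reduce the statement to the growth of the parsing length $c_N$, and then to control $c_N$ through longest-match lengths using the decoupling inequalities and thermodynamic-formalism estimates that the paper develops for HMPs. First I would record the elementary relation between the parsing and longest matches. By construction the $i$-th parsed word is the shortest prefix of the current $\mathbf{y}$-suffix that is absent from $x_1^N$, so, away from the final (truncated) word, its length equals $\Lambda_N^{(i)}+1$, where $\Lambda_N^{(i)}$ is the longest-match length of that suffix against $x_1^N$. Summing the word lengths gives $N=\sum_{i=1}^{c_N}(\Lambda_N^{(i)}+1)$ up to a boundary correction of size $O(\max_i\Lambda_N^{(i)})=O(\ln N)$, so $N-c_N=\sum_i\Lambda_N^{(i)}+O(\ln N)$. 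Once one knows the typical match length diverges, $c_N=o(N)$, whence $Q_N=\frac{c_N\ln N}{N-c_N}\sim\frac{c_N\ln N}{N}$, and it suffices to prove
\[
    \frac{c_N\ln N}{N}\longrightarrow \Sc(\PY|\PX)\qquad\text{almost surely.}
\]
Equivalently, the mean parsed-word length must concentrate at $\frac{\ln N}{\Sc}$, matching the single longest-match asymptotics $\frac{\ln N}{\Lambda_N(\Y,\X)}\to\Sc(\PY|\PX)$ recalled above.

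Second, I would assemble the ingredients from the thermodynamic formalism. The matrix-product form of the HMP marginals, together with a Ruelle--Perron--Frobenius/transfer-operator analysis, yields both the existence of $\Sc(\PY|\PX)$ and exponential large-deviation estimates for $-\frac{1}{l}\ln\PX[y_1^l]$ under $\PY$; by stationarity these control every window $y_i^{i+l-1}$ simultaneously up to a summable error. In parallel, upper and lower decoupling inequalities of the form $c\,\PX[u]\PX[v]\le\PX[uv]\le C\,\PX[u]\PX[v]$ let one estimate the occurrence count $\#\{k\le N-l+1:x_k^{k+l-1}=w\}$ of a word $w$ in $x_1^N$ through its first two moments, which, using the \emph{independence} of $\X$ and $\Y$, concentrate around $N\,\PX[w]$. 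These are the tools that replace the exact independence one would have for i.i.d.\ sources.

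Third, I would derive matching two-sided bounds on $c_N$. For the upper bound (words not too short), fix $\epsilon>0$ and $l=(1-\epsilon)\frac{\ln N}{\Sc}$: for $\PY$-typical $\mathbf{y}$ one has $N\,\PX[y_i^{i+l-1}]\to\infty$, so a lower-decoupling second-moment argument shows that each such window occurs in $x_1^N$; hence all but a negligible fraction of parsed words have length at least $l$, giving $c_N\le(1+o(1))\frac{N\,\Sc}{(1-\epsilon)\ln N}$. For the lower bound (words not too long on average), take $l=(1+\epsilon)\frac{\ln N}{\Sc}$ and write $N=\sum_i(\Lambda_N^{(i)}+1)\le(l+1)c_N+\sum_i(\Lambda_N^{(i)}-l)_+$; a first-moment union bound over the $\approx c_N$ parsing positions and the $N$ database positions shows the excess $\sum_i(\Lambda_N^{(i)}-l)_+$ is $o(N)$, whence $c_N\ge(1-o(1))\frac{N\,\Sc}{(1+\epsilon)\ln N}$. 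Letting $\epsilon\downarrow0$ gives $\frac{c_N\ln N}{N}\to\Sc$ in probability, and, since the decoupling estimates are exponential, the deviation probabilities are summable along a geometric subsequence $N_m$, so Borel--Cantelli together with a monotonicity/sandwiching argument in $N$ upgrades the convergence to almost sure.

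The \emph{main obstacle} is the dependence among the many longest-match computations inside a single parse: they share the one database $x_1^N$ and start at random, data-dependent positions, so the clean independence available for i.i.d.\ sources is lost. This is exactly where the decoupling inequalities do the work, allowing one to pass from the single-match asymptotics to the averaged statement over the whole parse. A secondary subtlety is that the maximal match length over all positions is of order $2\frac{\ln N}{\Sc}$ rather than $\frac{\ln N}{\Sc}$; one must ensure these rare long matches do not corrupt the average $\frac{1}{c_N}\sum_i\Lambda_N^{(i)}$, which is precisely what the truncation step in the lower bound handles.
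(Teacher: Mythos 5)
Your high-level reduction (relate word lengths to longest-match lengths, show $\tfrac{c_N\ln N}{N}\to \Sc(\PY|\PX)$, use decoupling as a surrogate for independence) is sound and consistent with the paper's starting point, but both halves of your two-sided control of $c_N$ rest on steps that fail for HMPs. For the upper bound on $c_N$ you need that, for $l=(1-\epsilon)\tfrac{\ln N}{\Sc}$, all but $o(N/\ln N)$ of the windows $y_i^{i+l-1}$ satisfy $-\ln\PX[y_i^{i+l-1}]<\ln N$ and then occur in $x_1^N$. The first part is an \emph{upper} large-deviation statement for $-\ln\PX[Y_1^l]$ under $\PY$ at a level just above its mean; its rate is governed by the right derivative of the pressure $\bar q$ at $0$, and the paper explicitly cannot guarantee differentiability there, so the rate function may vanish on $[D_-\bar q(0),D_+\bar q(0)]$ and the fraction of atypical windows need not be $o(1/\ln N)$. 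The second part cannot be obtained by a second-moment argument: with decoupling constants $\Exp{\pm k}$ entering squared, Paley--Zygmund only yields occurrence with probability bounded below by a constant, not with high probability; one needs the lower-decoupling waiting-time bound~\eqref{eq:KB}. The paper sidesteps both issues by never lower-bounding individual word \emph{lengths}: Lemma~\ref{lem:ing-I} lower-bounds $-\ln\PX[\overline{w}^{(i,N)}]$ using only that each full word is \emph{absent} from $x_1^N$ (via~\eqref{eq:KB}), and Lemma~\ref{lem:ing-III} converts the sum into $-\ln\PX[Y_1^N]+kc_N$ by repeated use of~\eqref{eq:UD}, after which Kingman's theorem (Lemma~\ref{lem:cons-KSET}) finishes the job. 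No upper-deviation estimate is ever needed.

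For the lower bound on $c_N$, your truncated first-moment bound on $\sum_i(\Lambda_N^{(i)}-l)_+$ does not give $o(N)$. The union bound $\operatorname{Prob}\{\Lambda_N(Y_i^{\cdot},\X)\geq L\}\leq N\Exp{L\bar q(-1)}$ with $L=(1+\epsilon)\tfrac{\ln N}{\Sc}$ yields an expected number of long-match positions of order $N^{2-(1+\epsilon)|\bar q(-1)|/\Sc}$, which exceeds $N$ whenever $|\bar q(-1)|<\Sc(\PY|\PX)$\,---\,the generic case by Jensen's inequality\,---\,so the excess is not controlled; moreover the parsing positions are data-dependent, so restricting the sum to the $c_N$ actual positions requires a union bound over all $\binom{N}{c}$ candidate parsings, whose entropy cost $\Exp{\frac{1+\ln\lambda}{\lambda}N}$ can only be beaten by exponential concentration, not first moments. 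The paper's route is again through log-likelihoods: each retained word $\underline{w}^{(i,N)}$ \emph{is} found in $x_1^N$, forcing $-\sum_i\ln\PX[\underline{w}^{(i,N)}]\leq(1+\epsilon)c_N\ln N$ (Lemma~\ref{lem:ing-II}), while a Chernoff bound at $\alpha<0$ using the subadditive pressure, its convexity, and the identification $D_-\bar q(0)=\Sc(\PY|\PX)$ (Lemmas~\ref{lem:left-der} and~\ref{lem:ing-IV}) gives $(N-c_N)(\Sc(\PY|\PX)-\epsilon)\leq-\sum_i\ln\PX[\underline{w}^{(i,N)}]$ uniformly over candidate parsings. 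Your proposal is missing precisely these two ideas\,---\,superadditivity of log-likelihoods in place of per-word length typicality, and the one-sided (left-derivative) thermodynamic control\,---\,and without them the argument does not close.
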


Note that Theorem~\ref{thm:ieee-main} provides an estimation of the cross entropy rate between the HMPs~$\X$ and~$\Y$, not the underlying (hidden) Markov processes.
In Section~\ref{sec:prelim}, we discuss some important preliminaries for our presentation of the proof in Section~\ref{sec:proof}. The result is illustrated using numerical experiments in Section~\ref{sec:num}.

\section{Decoupling properties of hidden-Markov processes}
\label{sec:prelim}

In this section, we identify the key properties of HMPs that we will use for the proof of Theorem~\ref{thm:ieee-main}. While these properties are natural from the point of view of the statistical mechanics of lattice gases, we suspect that they might not be familiar to the information theory community.

\subsection{Decoupling inequalities and their first consequence}
\label{ssec:dec}

It is shown in~\cite{BCJPPExamples} that if $\P$ is the law of a HMP satisfying i--ii, then there exist two natural numbers $k$ and $\tau$ with the following two properties: 
\begin{itemize}
    \item for all strings $a$ and $b$, 
    \begin{equation}
    \label{eq:UD}
        {\P[a b]} \leq \Exp{k}{\P[a]\P[b]};
    \end{equation}
    \item for all strings $a$ and $b$, there exists a string $\xi$ with length $|\xi| \leq \tau$ such that 
    \begin{equation}
    \label{eq:SLD}
         {\P[a \xi b]}  \geq \Exp{-k} {\P[a]\P[b]}.
     \end{equation}
\end{itemize}
These properties and generalizations thereof\footnote{A particularly useful generalization consists in allowing $\tau$ and $k$ to grow sublinearly with the length of the string~$a$. However, we will not consider this generalization as it is unnecessary for HMPs and complicates some arguments. Still, as stated here, the inequalities~\eqref{eq:UD}--\eqref{eq:SLD} are weaker than the so-called ``quasi-Bernoulli'' property and imply no form of mixing.} are known as \emph{decoupling} properties and have been instrumental in recent progress in large-deviation theory and its connection to statistical mechanics and information theory; see~\cite{Pf02,CJPS19,CR23}. A particularly convenient tool in their derivation in the context of HMPs is the so-called ``positive-matrix product'' representation; see {\em e.g.}~\cite{BCJPPExamples}.
The following is a straightforward consequence of Kingman's subadditive ergodic theorem~\cite{Ki68}.
\begin{lemma}
\label{lem:cons-KSET}
    If~$\PX$ satisfies~\eqref{eq:UD} and $\PY$ is ergodic, then the limit $\Sc(\PY|\PX)$ exists and
    \begin{equation}
    \label{eq:cross-SMB}
        \lim_{n\to\infty} \frac {-\ln \PX[Y_1^n]}{n} = \Sc(\PY|\PX)
    \end{equation}
    almost surely.
\end{lemma}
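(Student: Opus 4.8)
The plan is to recognize $f_n(\mathbf{y}) := -\ln\PX[y_1^n]$ as the negative of a subadditive cocycle over the shift and to apply Kingman's subadditive ergodic theorem~\cite{Ki68} on the ergodic dynamical system $(\cA^\nn,\PY,\sigma)$, where $\sigma$ denotes the left shift. The only structural input is the upper decoupling inequality~\eqref{eq:UD}; everything else is standard machinery. First I would extract the subadditivity. Writing $y_1^{m+n}$ as the concatenation of $a=y_1^m$ and $b=y_{m+1}^{m+n}=(\sigma^m\mathbf{y})_1^n$, inequality~\eqref{eq:UD} gives $\PX[y_1^{m+n}]\le \Exp{k}\,\PX[y_1^m]\,\PX[(\sigma^m\mathbf{y})_1^n]$. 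Taking logarithms, the function $g_n(\mathbf{y}) := \ln\PX[y_1^n]+k$ satisfies $g_{m+n}\le g_m + g_n\circ\sigma^m$ pointwise, the additive constant $k$ being exactly what is needed to absorb the factor $\Exp{k}$ and make the cocycle genuinely subadditive.

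Next I would verify the hypotheses of Kingman's theorem and identify the limit. Since $\PX[\,\cdot\,]\le 1$ we have $g_n\le k$, so the positive parts are uniformly bounded; in the non-degenerate case where $\PX$ charges every symbol of $\cA$, the function $g_1$ is bounded, hence integrable, and Kingman's theorem together with ergodicity of $\PY$ yields that $g_n/n$ converges $\PY$-almost surely and in $L^1$ to the constant $\gamma := \lim_n \tfrac1n\,\mathbb{E}_{\PY}[g_n]=\inf_n \tfrac1n\,\mathbb{E}_{\PY}[g_n]$. Consequently $f_n/n=(k-g_n)/n\to -\gamma$ almost surely. To match the limit with the cross entropy rate, observe that $\mathbb{E}_{\PY}[f_n]=\sum_{a\in\cA^n}\PY[a]\,(-\ln\PX[a])$, so the existence of the Cesàro limit $\lim_n \tfrac1n\,\mathbb{E}_{\PY}[f_n]=-\gamma$ guaranteed by the theorem is precisely the statement that $\Sc(\PY|\PX)$ exists, and it coincides with the almost-sure limit of $-\ln\PX[Y_1^n]/n$; this is~\eqref{eq:cross-SMB}. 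As a consistency check, $g_n\le k$ forces $\gamma\le 0$, recovering $\Sc(\PY|\PX)\ge 0$.

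I expect the main obstacle to be bookkeeping rather than any deep difficulty, and it has two facets. The first is the direction of the inequality: \eqref{eq:UD} makes $-\ln\PX$ \emph{super}additive, so one must pass to $g_n=\ln\PX[\,\cdot\,]+k$ to obtain a subadditive quantity and carefully track the constant $k$ through to the limit. The second is the degenerate case: if $\PX$ assigns zero probability to some symbol that $\PY$ charges, then iterating~\eqref{eq:UD} shows every sufficiently long block has $\PX$-probability zero, so that $f_n\equiv+\infty$ eventually and $\gamma=-\infty$; one then checks that $\Sc(\PY|\PX)=+\infty$ as well, so that both sides of~\eqref{eq:cross-SMB} agree in $[0,\infty]$ under the usual conventions, and Kingman's theorem in its form allowing an infinite limit still delivers the almost-sure convergence.
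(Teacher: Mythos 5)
Your proposal is correct and follows exactly the route the paper intends: the paper gives no details beyond declaring the lemma ``a straightforward consequence of Kingman's subadditive ergodic theorem,'' and your argument---turning the upper decoupling inequality~\eqref{eq:UD} into subadditivity of $\ln\PX[y_1^n]+k$ over the shift on $(\cA^\nn,\PY,\sigma)$, applying Kingman, and identifying the limit via the Ces\`aro means---is precisely that consequence, with the degenerate case handled consistently with the paper's later remark that both sides are infinite when some string is charged by $\PY$ but not by $\PX$.
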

It should be noted that the role of~\eqref{eq:UD} for the study of the KL divergence rate for HMPs dates back at least to~\cite{Le92}.

\subsection{The thermodynamic formalism} 
\label{ssec:ent-thermo}

In our proof, a key role is played by the following \emph{pressure}: 
\begin{equation}
\label{eq:qbar-def}
    \bar{q}(\alpha) := \limsup_{\ell\to\infty}\frac{q_\ell(\alpha)}\ell,
\end{equation}
where $\alpha$ is a real variable and
\begin{equation}
\label{eq:qn-def}
    q_\ell(\alpha) := \ln \sum_{a \in \cA^\ell} \Exp{-\alpha\ln \PX[a]} \PY[a].
\end{equation}
The function~$\bar{q}$ is a (possibly improper) monotone, convex function satisfying $\bar{q}(0) = 0$ and $\bar{q}(-1) < 0$, the latter being a consequence of Condition~iii.
While it is not a pressure in the sense of the usual (additive) thermodynamic formalism, the bound~\eqref{eq:UD} provides the requirement for the \textit{subadditive thermodynamic formalism} of {\em e.g.}~\cite{CFH08}. As such, the pressure~$\bar{q}$ still satisfies the variational principle
\begin{equation}
\label{eq:q-var}
    \bar{q}(\alpha) = \sup_{\Q } \left[\alpha \int f_{\PX} \dd\Q  - d^{\textnormal{KL}}(\Q |\PY)\right]
\end{equation}
for all $\alpha \leq 0$, where the supremum is taken over all shift-invariant laws~$\Q $ and 
\[ 
   f_{\PX}(\mathbf{z}) := \limsup_{n\to\infty} \frac{-\ln \PX[z_1^n]}{n}.
\] 
The role of such variational principles in the study of entropic quantities\,---\,which can be traced back to ideas of Gibbs~\cite{Gib}\,---\,has become ubiquitous in the mathematical literature on statistical mechanics, dynamical systems, and large deviations building on seminal works of Ruelle, Bowen and Walters, {\em e.g.}~\cite{Ru73,Bo74,Wa75b}. 
While we have no guarantee that~$\bar{q}$ is differentiable at the origin, the variational principle~\eqref{eq:q-var} allows us to identify the left derivative; see Figure~\ref{fig:q-and-h}.
\begin{lemma}
If $\PX$ and $\PY$ satisfy~\eqref{eq:UD} and are ergodic, then
\label{lem:left-der}
    \begin{equation*}
         D_- \bar{q}(0) = \Sc(\PY|\PX).
    \end{equation*}
\end{lemma}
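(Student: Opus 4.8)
The plan is to read off the left derivative directly from the variational principle~\eqref{eq:q-var}, using weak-$*$ compactness of the set of shift-invariant laws together with the lower semicontinuity of the ``energy'' functional $\Q\mapsto\int f_{\PX}\dd\Q$. Since $\bar q$ is convex with $\bar q(0)=0$, the difference quotient $\bar q(\alpha)/\alpha$ is nondecreasing in $\alpha<0$, so the left derivative exists and equals $\sup_{\alpha<0}\bar q(\alpha)/\alpha$; the task thus splits into two inequalities. As a preliminary I would note that Lemma~\ref{lem:cons-KSET}, applied to $\PY$, shows that the $\limsup$ defining $f_{\PX}$ is almost surely a genuine limit equal to $\Sc(\PY|\PX)$, so that $\int f_{\PX}\dd\PY=\Sc(\PY|\PX)=:s$. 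For the upper bound $D_-\bar q(0)\le s$ I would then simply test~\eqref{eq:q-var} against the competitor $\Q=\PY$: since $\Sr(\PY|\PY)=0$, this gives $\bar q(\alpha)\ge\alpha s$ for all $\alpha\le0$, and dividing by $\alpha<0$ and taking the supremum yields the claim.

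The reverse inequality is the crux. Writing $\beta=-\alpha>0$ and inserting~\eqref{eq:q-var}, the gap can be expressed as
\[
    s-D_-\bar q(0)=\lim_{\beta\downarrow0}\sup_{\Q}\Bigl[s-\textstyle\int f_{\PX}\dd\Q-\tfrac1\beta\Sr(\Q|\PY)\Bigr],
\]
and I must show this limit is $0$ (it is manifestly $\ge0$). Arguing by contradiction, suppose it equals some $c>0$; since the inner supremum decreases as $\beta\downarrow0$, it exceeds $c$ for every $\beta$, and I would extract near-maximizers $\Q_\beta$ whose bracket is $\ge c/2$. Because $\int f_{\PX}\dd\Q\ge0$, these satisfy both $\Sr(\Q_\beta|\PY)\le s\beta\to0$ and $\int f_{\PX}\dd\Q_\beta\le s-c/2$. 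By weak-$*$ compactness of the shift-invariant laws and lower semicontinuity of $\Q\mapsto\Sr(\Q|\PY)$, any subsequential limit $\Q_*$ has $\Sr(\Q_*|\PY)=0$, hence $\Q_*=\PY$; lower semicontinuity of the energy then forces $s=\int f_{\PX}\dd\PY\le\liminf\int f_{\PX}\dd\Q_\beta\le s-c/2$, the desired contradiction.

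The main obstacle is exactly the lower semicontinuity of $\Q\mapsto\int f_{\PX}\dd\Q$, and this is where the decoupling bound does the real work. From~\eqref{eq:UD} one gets $-\ln\PX[ab]\ge-\ln\PX[a]-\ln\PX[b]-k$, so that $b_n(\Q):=\int(-\ln\PX[z_1^n])\dd\Q$ is superadditive up to the additive constant $k$; Fekete's lemma then gives $\Sc(\Q|\PX)=\lim_n b_n(\Q)/n=\sup_n\bigl(b_n(\Q)-k\bigr)/n$. Identifying this limit with $\int f_{\PX}\dd\Q$ (via Lemma~\ref{lem:cons-KSET} and ergodic decomposition) exhibits the energy as a supremum of the weak-$*$ lower semicontinuous affine functionals $\Q\mapsto\bigl(b_n(\Q)-k\bigr)/n$, hence as a lower semicontinuous functional on a compact set. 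It is this subadditive-thermodynamic-formalism representation, rather than any delicate large-deviation tail estimate for $-\ln\PX[Y_1^n]$, that makes the compactness argument close and pins the left derivative to $s=\Sc(\PY|\PX)$.
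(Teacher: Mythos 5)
Your overall architecture --- reading $D_-\bar q(0)$ off the variational principle~\eqref{eq:q-var}, getting the upper bound by testing $\Q=\PY$, and getting the lower bound by a compactness/lower-semicontinuity argument that forces near-maximizers at $\alpha=-\beta$ to accumulate at $\PY$ --- is the route the paper has in mind: it explicitly defers to a ``uniqueness argument for the law maximizing~\eqref{eq:q-var} in $\alpha=0$'' of Ruelle type, as in~\cite{BJPP18}. Your convex-analysis scaffolding is correct, and your derivation of the lower semicontinuity of $\Q\mapsto\int f_{\PX}\dd\Q$ from the superadditivity of $-\ln\PX$ and Fekete's lemma is sound.

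The genuine gap is the step you dispatch with a single ``hence'': from $\Sr(\Q_*|\PY)=0$ you conclude $\Q_*=\PY$. For an arbitrary ergodic $\PY$ this implication is false --- vanishing of the KL divergence \emph{rate} only says the marginal divergences grow sublinearly and does not separate stationary measures --- and this uniqueness statement is exactly what the paper identifies as the nontrivial content of the lemma. Tellingly, it is also the only place where the hypothesis that $\PY$ (and not just $\PX$) satisfies~\eqref{eq:UD} must enter, and your proof as written never uses that hypothesis. The repair is short but essential: upper decoupling for $\PY$ makes $n\mapsto\sum_{a\in\cA^n}\Q[a](-\ln\PY[a])-k$ superadditive, so $\Sc(\Q|\PY)\geq\frac1n\bigl(\sum_{a\in\cA^n}\Q[a](-\ln\PY[a])-k\bigr)$ for every $n$, while $h(\Q)\leq\frac1n\sum_{a\in\cA^n}\Q[a](-\ln\Q[a])$ by subadditivity of entropy; subtracting gives $\sum_{a\in\cA^n}\Q[a]\ln\bigl(\Q[a]/\PY[a]\bigr)\leq n\,\Sr(\Q|\PY)+k$. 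Hence $\Sr(\Q_*|\PY)=0$ forces the marginal relative entropies to be bounded by $k$ uniformly in $n$, which yields $\Q_*\ll\PY$, and a shift-invariant measure absolutely continuous with respect to the ergodic measure $\PY$ must equal it. With this inserted, your contradiction closes. (A smaller remark: you locate ``the main obstacle'' in the lower semicontinuity of the energy, whereas the paper --- for the reason above --- locates it in this uniqueness step.)
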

\begin{figure}
    \centering
    \includegraphics{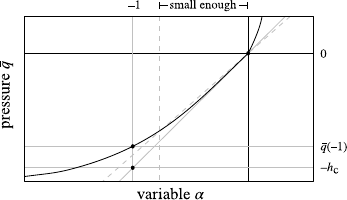}
    \caption{The pressure may or may not be differentiable, but its left derivative at the origin can still be identified as the cross entropy rate under quite general assumptions. As a visual aid to the convexity argument for Lemma~\ref{lem:ing-IV}, the gray dashed line (figuratively) has slope $D_-\bar{q}(0)-\tfrac \epsilon 4$.}
    \label{fig:q-and-h}
\end{figure}
The proof of this nontrivial fact relies on a uniqueness argument for the law maximizing~\eqref{eq:q-var} in $\alpha = 0$ that can be traced back to ideas of Ruelle, and which is explained {\em e.g.}\ in~\cite{BJPP18}; also see~\cite{Sim}. 
For actual Markov chains ({\em i.e.}\ when~$R$ is a permutation matrix), the left derivative does coincide with the right derivative as a consequence of the Perron--Frobenius theorem and standard perturbation theory arguments.

\subsection{Waiting-time estimate}
\label{ssec:KB}

The number of words in ZM-type parsings is intimately related to the so-called Wyner--Ziv problem on waiting times introduced in~\cite{WZ89} and notably studied in~\cite{Sh93,MS95,Ko98}. The key quantity in this problem is the time 
\[ 
    W(a,\mathbf{x}) = \inf\{r : x_{r}^{r+|a|-1} = a\}
\]
we need to wait to see the string~$a$ appear in a sequence~$\mathbf{x}$. Waiting times are dual to the longest-match lengths from Section~\ref{ssec:ZM}~\cite{WZ89}, but typically allow for more direct bounds.
It was recently shown in~\cite{CDEJR23w} that, for every string $a$ and natural number $r$,
\begin{equation}
\label{eq:KB}
    \operatorname{Prob} \{W(a,\X) > r\} \leq (1 - \Exp{-k}\PX[a])^{\left\lfloor\frac{r-1}{|a|+\tau}\right\rfloor},
\end{equation}
provided that the law~$\PX$ of~$\X$ under the underlying measure~$\operatorname{Prob}$ satisfies~\eqref{eq:SLD}.
This progress was made by revisiting the ideas of~\cite{Ko98} from the perspective of decoupling properties. In the recent works~\cite{AACG22,CR23}, a detailed analysis of the pressure $\bar{q}$ plays a crucial role in the study of large deviations of waiting times.

\section{Proof strategy}
\label{sec:proof}

We are now in a position to present the proof strategy for Theorem~\ref{thm:ieee-main}. 
{Still on~$\cA$, the same exact strategy applies to irreducible Kusuoka processes~\cite{JOP17,BCJPPExamples} and to $\psi$-mixing processes that satisfy $\psi^*(0) < \infty$ in the notation of~\cite{Bra05}, as soon as one can check the nondegeneracy condition $\bar{q}(-1) < 0$. 
Further generalizations leveraging relaxations of the decoupling conditions~\eqref{eq:UD}--\eqref{eq:SLD} are discussed in our more technical paper~\cite{BGPRb}. }

Throughout this section, the processes $\X$ and $\Y$ are fixed processes with hidden-Markov representations satisfying Conditions~i--iii from Section~\ref{ssec:hmm} and have respective laws $\PX$ and $\PY$. We will consider them as defined on a common underlying probability space in such a way that they are independent and use ``$\operatorname{Prob}$'' for the probability measure on this space. 
For each~$N$, the number $c_N$ of words, and the words $\overline{w}^{(1,N)}$, $\overline{w}^{(2,N)}$, $\dotsc$, $\overline{w}^{(c_N,N)}$ are functions of~$X_1^N$ and~$Y_1^N$  and can therefore be defined through composition as random variables on that same underlying probability space. 

\begin{definition}
    
    We use $\underline{w}^{(i,N)}$ for the (possibly empty) string obtained by taking $\overline{w}^{(i,N)}$ in Definition~\ref{def:main} without its last letter. 
\end{definition}

We use ``with high probability'' to describe sequences $(E_N)_{N=1}^{\infty}$ of events with the property that $\operatorname{Prob}(E_N) > 1 - O(N^{-2})$ as $N\to\infty$. 
In particular, the Borel--Cantelli (BC) lemma can be applied to the complement of finitely many such events to deduce almost sure statements.

First, one can show that if there is a string $a$ such that $\PY[a] > 0$ but $\PX[a] = 0$, then ergodicity implies that the theorem holds true in the sense that both sides must be infinite. Hence, we will assume from now on that this does not happen. For simplicity, we will present the proof in the case
$D_-\bar{q}(0) < \infty$. In the complementary infinite case, some of the bounds become vacuously true and the others are proved superficially adapting the same sequence of arguments as in the finite case.

\begin{lemma}
\label{lem:a-priori}
    There exists a constant $\kappa > 0$ such that
    \begin{align}
    \label{eq:UB on words}
        \max_{i=1,\dotsc,c_N} |\overline{w}^{(i,N)}|
        < \kappa \ln N
    \end{align}
    with high probability, and for every~$\lambda > 0$, 
    \begin{align}
    \label{eq:LB on words}
        \lambda
        &\leq \min_{i=1,\dotsc,c_N-1} |\underline{w}^{(i,N)}|
    \end{align}
    ---\,and thus $c_N \leq  \lambda^{-1} N$\,---\,with high probability.
\end{lemma}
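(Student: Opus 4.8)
The plan is to prove the two bounds separately, each by controlling a "bad event" and showing it occurs with probability $O(N^{-2})$, so that the Borel--Cantelli lemma delivers the almost-sure consequences. The key tool for both is the waiting-time estimate~\eqref{eq:KB}, which translates statements about whether a string appears in $x_1^N$ into tail probabilities, combined with the decoupling inequality~\eqref{eq:UD} to control the relevant marginals of $\PX$ and $\PY$.

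For the upper bound~\eqref{eq:UB on words}, I would argue as follows. A word $\overline{w}^{(i,N)}$ is long precisely when its prefix of that length \emph{does} appear somewhere in $x_1^N$; by definition a word of length $\ell$ means the length-$(\ell-1)$ prefix was found in $x_1^N$. So having some word of length $\ge \kappa \ln N$ forces the existence of \emph{some} length-$(\lceil \kappa \ln N\rceil - 1)$ substring of $\mathbf{y}$ that matches a substring of $x_1^N$, i.e.\ a long match. The natural route is to union-bound over all starting positions in $\mathbf{y}$ (there are at most $N$) and over all candidate strings $a$ of the critical length, and for each to estimate $\operatorname{Prob}\{W(a,\X) \le N\}$ against the typicality of $a$ under $\PY$. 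Concretely, I would split strings $a$ of length $L := \lceil \kappa \ln N \rceil$ into those with $\PX[a]$ large and those with $\PX[a]$ small: for the ''small-probability'' strings a single appearance in $x_1^N$ is itself unlikely (a first-moment/union bound using $\operatorname{Prob}\{W(a,\X)\le N\} \le N \PX[a]$ via~\eqref{eq:UD}), while the $\PY$-mass of ''large-probability'' strings is controlled because by Lemma~\ref{lem:cons-KSET} a typical $\mathbf{y}$ has $-\ln\PX[Y_1^L] \approx L\,\Sc(\PY|\PX)$, so $\PX$-atypically-large strings carry little $\PY$-mass. Choosing $\kappa$ large enough relative to $\Sc(\PY|\PX)$ (or relative to $\max\{1,\ln\#\cA\}$ and the decoupling constant $k$) makes the combined bound $O(N^{-2})$.

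For the lower bound~\eqref{eq:LB on words}, the logic runs the opposite way: the reduced word $\underline{w}^{(i,N)}$ (the parsed word without its last letter) \emph{does} appear in $x_1^N$ by construction, so a \emph{short} $\underline{w}^{(i,N)}$, say of length $<\lambda$, means some short string appeared in $x_1^N$. Here I would union-bound over the boundedly many strings $a$ with $|a| < \lambda$ and estimate, for each, the probability that it fails to appear within the first $N$ symbols of $\X$ versus the probability it is forced to be parsed. Using~\eqref{eq:KB}, $\operatorname{Prob}\{W(a,\X) > N\}$ decays like $(1-\Exp{-k}\PX[a])^{\lfloor (N-1)/(|a|+\tau)\rfloor}$, which for fixed $|a|$ and fixed positive $\PX[a]$ is exponentially small in $N$, hence certainly $O(N^{-2})$; the only strings that can be an obstruction are those with $\PX[a]=0$, but these have been excluded by the running assumption that $\PY[a]>0 \Rightarrow \PX[a]>0$. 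The deterministic consequence $c_N \le \lambda^{-1}N$ is then immediate: each of the first $c_N-1$ parsed words has reduced length $\ge \lambda$, so its full length is $\ge \lambda$, and summing the lengths gives $N \ge (c_N-1)\lambda$.

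The main obstacle is the upper bound~\eqref{eq:UB on words}, specifically the bookkeeping that balances the number of candidate long strings against their probabilities. The count of strings of length $L=\lceil\kappa\ln N\rceil$ is exponential in $L$, i.e.\ polynomial in $N$ of degree $\kappa\ln\#\cA$, and this must be beaten by the smallness of either the $\PX$-appearance probability or the $\PY$-mass; getting a clean $O(N^{-2})$ therefore hinges on the two-regime split above and on invoking the almost-sure asymptotics of Lemma~\ref{lem:cons-KSET} uniformly enough to pin down the $\PY$-mass of $\PX$-atypical strings. Making this uniformity quantitative\,---\,rather than merely asymptotic\,---\,is the delicate point, and it is precisely where the pressure $\bar q$ and its finite-$\ell$ approximants $q_\ell$ from~\eqref{eq:qn-def} enter: a large-deviation-type bound $\PY\{a : -\ln\PX[a] > L(\Sc+\delta)\} \le \Exp{L(\bar q(\alpha)+\alpha(\Sc+\delta))}$ for a suitable $\alpha<0$ furnishes exactly the exponential decay needed to absorb the polynomial multiplicity, with $\kappa$ then chosen from the relation between $\bar q$ and $D_-\bar q(0)=\Sc(\PY|\PX)$.
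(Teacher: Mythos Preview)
Your lower-bound argument for~\eqref{eq:LB on words} is essentially the paper's: union-bound over the finitely many strings of length at most~$\lambda$ with positive $\PY$-mass (hence positive $\PX$-mass by the running assumption), and use~\eqref{eq:KB} to show each appears in $X_1^N$ except with exponentially small probability.

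For the upper bound~\eqref{eq:UB on words}, your route is workable but considerably more elaborate than what the paper does, and your write-up contains an inconsistency. The paper avoids any two-regime split: it simply bounds
\[
\operatorname{Prob}\{W(Y_1^\ell,\X)\le N\}\;\le\;\sum_{a\in\cA^\ell}\PY[a]\cdot N\PX[a]\;=\;N\Exp{q_\ell(-1)},
\]
and then uses the single structural fact $\bar q(-1)<0$ (a consequence of Condition~iii) to make the right-hand side decay like any prescribed negative power of~$N$ once $\ell=\kappa\ln N$ with $\kappa$ large. A union bound over the at most~$N$ starting positions in~$\mathbf y$ finishes. No splitting into ``$\PX$-small'' and ``$\PX$-large'' strings, no appeal to Lemma~\ref{lem:cons-KSET}, and no optimization over~$\alpha$ are needed; the Chernoff point $\alpha=-1$ is already enough.

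The inconsistency in your version: your plan says the first-moment bound handles the \emph{small-$\PX$} strings and that the $\PY$-mass of the \emph{large-$\PX$} strings must be shown to be small, which is a \emph{lower}-tail event $\{-\ln\PX[a]<\theta L\}$ with $\theta<\Sc(\PY|\PX)$. But in your final paragraph you write down the \emph{upper}-tail event $\{-\ln\PX[a]>L(\Sc+\delta)\}$, which is the wrong direction for the split you described. The correct Chernoff bound for your plan would use $\alpha>0$ (equivalently, evaluate the pressure at a negative argument) and read $\PY\{-\ln\PX[Y_1^L]\le\theta L\}\le\Exp{L(\alpha\theta+\bar q(-\alpha)+o(1))}$, which is indeed exponentially small in~$L$ for $\theta<\Sc$ by convexity and Lemma~\ref{lem:left-der}. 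With that fix your argument goes through, but it remains a detour compared to the one-line use of $\bar q(-1)<0$.
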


\begin{proof}[Proof sketch]
    For every $\ell$, by stationarity and a union bound (over the locations at which $Y_1^\ell$ can appear in~$\X$), we have 
    \begin{align}
        \operatorname{Prob}\left\{ W(Y_1^\ell, \X) < N\right\}
            &\leq \sum_{a \in \cA^\ell} \PY[a] \cdot N \PX[a] 
            \notag \\
            &= N \Exp{q_\ell(-1)},
            \label{eq:W-union}
    \end{align}
    where $q_\ell$ is defined in~\eqref{eq:qn-def} and $q_\ell(-1) = \ell \bar{q}(-1) + o(\ell)$ by~\eqref{eq:qbar-def}. Also recall from Section~\ref{ssec:ent-thermo} that $\bar{q}(-1)<0$ as a consequence of Condition~iii.
    In particular, taking $\ell = \kappa \ln N$ with $\kappa$ large enough, the right-hand side of~\eqref{eq:W-union} can be made to decay as fast as any inverse power of~$N$.
    By stationarity, the same is true with $Y_1^\ell$ replaced with $Y_{m}^{m+\ell}$. Hence, by a union bound (over $m$), we find that, with high probability, no substring of $Y_1^N$ of length $\kappa \ln N$ appears in~$X_1^N$.

    On the other hand, the bound~\eqref{eq:KB} gives an exponentially decaying bound on the probability that some $a$ does not appear in~$X_1^N$. But that bound can be made uniform in the choice of~$a$ of \emph{fixed} length $\lambda$ with $\PY[a] > 0$.  Hence, by a union bound (over the choices of~$a$), we find that, with high probability, all possible strings of length $\lambda$ appear somewhere in $X_1^N$. 
\end{proof}

\begin{lemma}
\label{lem:ing-I}
    For every $\epsilon > 0$,
    $$
        (1-\epsilon)[c_N-1] \ln N 
                \leq -\sum_{i=1}^{c_N-1} \ln \PX[\overline{w}^{(i,N)}]
    $$
    with high probability.
\end{lemma}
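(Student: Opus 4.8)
The plan is to argue by contraposition at the level of single words: a string whose $\PX$-probability is not too small appears in $X_1^N$ with high probability, so any parsed word $\overline{w}^{(i,N)}$ with $i \leq c_N - 1$, which by the definition of the mZM parsing fails to appear in $X_1^N$, must carry small $\PX$-probability. Quantifying this with the threshold $\PX[a] > N^{-(1-\epsilon)}$ yields the per-word estimate $-\ln \PX[\overline{w}^{(i,N)}] \geq (1-\epsilon)\ln N$, and summing over $i$ produces the claim.

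First I would fix the relevant length scale using Lemma~\ref{lem:a-priori}: with high probability every parsed word satisfies $|\overline{w}^{(i,N)}| < \kappa \ln N$. Then, for a \emph{fixed} string $a$ with $\PX[a] > N^{-(1-\epsilon)}$ and $|a| < \kappa \ln N$, I would feed $r = N - |a| + 1$ into the waiting-time estimate~\eqref{eq:KB} and use $1 - t \leq \Exp{-t}$ to obtain
\[
    \operatorname{Prob}\{W(a,\X) > N - |a| + 1\} \leq \exp\!\left(-\Exp{-k}\,\PX[a]\,\Big\lfloor \tfrac{N-|a|}{|a|+\tau}\Big\rfloor\right) \leq \exp\!\left(-c\,\tfrac{N^{\epsilon}}{\ln N}\right)
\]
for a suitable constant $c>0$ and all large $N$, since $\PX[a] > N^{-(1-\epsilon)}$ and $|a| + \tau = O(\ln N)$. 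The left-hand side is exactly the probability that such an $a$ is absent from $X_1^N$, and it decays faster than any inverse power of~$N$.

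The crucial step is to upgrade this to a statement uniform over the \emph{random} parsed words, which I would handle by a union bound over all fixed candidate strings. There are fewer than $(\#\cA)^{\kappa \ln N} = N^{\kappa \ln \#\cA}$ strings of length below $\kappa \ln N$, so
\[
    \operatorname{Prob}\{\exists\, a : |a| < \kappa\ln N,\ \PX[a] > N^{-(1-\epsilon)},\ a \text{ absent from } X_1^N\} \leq N^{\kappa \ln \#\cA}\,\exp\!\left(-c\,\tfrac{N^{\epsilon}}{\ln N}\right),
\]
and the super-polynomial decay of the exponential dominates the polynomial count, so the right-hand side is $O(N^{-2})$. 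On the complementary high-probability event, every short heavy string appears in $X_1^N$. Since each $\overline{w}^{(i,N)}$ with $i \leq c_N - 1$ is, by the definition of the mZM parsing, a short string (by Lemma~\ref{lem:a-priori}) that is absent from $X_1^N$, the contrapositive forces $\PX[\overline{w}^{(i,N)}] \leq N^{-(1-\epsilon)}$, i.e.\ $-\ln \PX[\overline{w}^{(i,N)}] \geq (1-\epsilon)\ln N$; summing over $i = 1,\dotsc,c_N-1$ gives the stated inequality.

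The main obstacle is precisely the dependence of the parsed words $\overline{w}^{(i,N)}$ on both $\X$ and $\Y$, which forbids a direct application of~\eqref{eq:KB} to them. The union bound over fixed candidate strings decouples this dependence, and the argument closes only because the exponential decay driven by $N^{\epsilon}$ outruns the polynomially many short strings. One also needs the two high-probability events\,---\,the length control of Lemma~\ref{lem:a-priori} and the appearance of all heavy short strings\,---\,to be intersected while keeping the failure probability $O(N^{-2})$, which is immediate since a finite intersection of high-probability events is again high-probability.
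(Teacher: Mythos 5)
Your proposal is correct and follows essentially the same route as the paper: it combines the length control of Lemma~\ref{lem:a-priori} with the waiting-time bound~\eqref{eq:KB} and the observation that each $\overline{w}^{(i,N)}$ with $i \leq c_N-1$ is absent from $X_1^N$, handling the dependence of the parsed words on $(\X,\Y)$ by a union bound over the polynomially many candidate strings of length below $\kappa\ln N$. The only cosmetic difference is that you establish the per-word bound $-\ln\PX[\overline{w}^{(i,N)}]\geq(1-\epsilon)\ln N$ directly, whereas the paper phrases the reduction via the averaging observation that a failure of the summed inequality forces at least one small term; these are the same argument.
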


\begin{proof}[Proof sketch]
    The two crucial observations are the following:
    \begin{itemize}
        \item If a sum of $c_N-1$ nonnegative terms is smaller than $(1-\epsilon)[c_N-1] \ln N$, then at least one term must be smaller than $\ln N^{1-\epsilon}$;
        \item If $\overline{w}^{(i,N)}$ is a word in a parsing of~$Y_1^N$ that uses words not found in~$X_1^N$, then $W(\overline{w}^{(i,N)},\X) \geq N - |\overline{w}^{(i,N)}|$.
    \end{itemize}
    The inequality~\eqref{eq:KB} from Section~\ref{ssec:KB} provides a tension between those two observations: an index~$i$ with small $-\ln\PX[\overline{w}^{(i,N)}]$ is rarely associated with a large waiting time.
    Thanks to Lemma~\ref{lem:a-priori}, these observations can be turned into a rigorous proof using standard probabilistic techniques.
\end{proof}
 It should be noted that the proof of Lemma~\ref{lem:ing-I} is the only step that works for the mZM estimator, but not for the ZM estimator.

\begin{lemma}
\label{lem:ing-III}
   Almost surely,
    \[
        -\sum_{i=1}^{c_N-1} \ln \PX[\overline{w}^{(i,N)}]
        \leq - \ln \PX[Y_1^N] + kc_N.
    \]
\end{lemma}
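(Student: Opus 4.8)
The plan is to exploit the fact that $Y_1^N$ is literally the concatenation $\overline{w}^{(1,N)}\overline{w}^{(2,N)}\dotsb\overline{w}^{(c_N,N)}$ together with the upper-decoupling inequality~\eqref{eq:UD}, so that this becomes a deterministic (hence almost sure) estimate requiring no probabilistic machinery. First I would record that, almost surely, each parsed word carries strictly positive $\PX$-probability: being a substring of the realization $Y_1^N$ of~$\Y$, it has positive $\PY$-probability, and by the standing assumption that no string carries positive $\PY$-mass but zero $\PX$-mass, we get $\PX[\overline{w}^{(i,N)}] > 0$, so all the logarithms below are finite.

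Next I would apply~\eqref{eq:UD} repeatedly along the concatenation. Writing $Y_1^N = \overline{w}^{(1,N)}(\overline{w}^{(2,N)}\dotsb\overline{w}^{(c_N,N)})$ and peeling off one word at a time, each split contributes a factor $\Exp{k}$, so after $c_N - 1$ splits one obtains
\[
    \PX[Y_1^N] \leq \Exp{(c_N-1)k} \prod_{i=1}^{c_N} \PX[\overline{w}^{(i,N)}].
\]
Taking $-\ln$ and rearranging gives
\[
    -\sum_{i=1}^{c_N} \ln \PX[\overline{w}^{(i,N)}]
        \leq -\ln \PX[Y_1^N] + (c_N-1)k.
\]

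Finally I would reconcile the index range. Since $\PX[\overline{w}^{(c_N,N)}] \leq 1$, the omitted term $-\ln \PX[\overline{w}^{(c_N,N)}]$ is nonnegative, so discarding it can only decrease the left-hand side; hence $-\sum_{i=1}^{c_N-1} \ln \PX[\overline{w}^{(i,N)}] \leq -\sum_{i=1}^{c_N} \ln \PX[\overline{w}^{(i,N)}]$. Combining this with the displayed bound and using the crude estimate $(c_N-1)k \leq kc_N$ yields the claim. I expect no genuine obstacle here beyond bookkeeping: the only points to watch are the almost-sure finiteness of the logarithms and the fact that the iteration produces exactly $c_N-1$ decoupling factors, which the bound $(c_N-1)k \leq kc_N$ comfortably absorbs.
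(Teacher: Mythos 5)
Your proposal is correct and follows exactly the paper's approach: the paper's proof sketch is simply ``start with the string $Y_1^N$ and apply the upper decoupling inequality~\eqref{eq:UD} repeatedly,'' which is precisely your peeling argument. Your additional bookkeeping (almost-sure positivity of $\PX[\overline{w}^{(i,N)}]$ via the standing assumption, dropping the nonnegative last term, and $(c_N-1)k \leq kc_N$) correctly fills in the details the paper leaves implicit.
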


\begin{proof}[Proof sketch]
    Start with the string~$Y_1^N$ and apply the upper decoupling inequality~\eqref{eq:UD} repeatedly.
\end{proof}

\begin{lemma}
\label{lem:ing-II}
    For every $\epsilon > 0$,
    \begin{equation}
    \label{eq:ing-II}
         -\sum_{i=1}^{c_N} \ln \PX[\underline{w}^{(i,N)}]
            \leq (1+\epsilon)c_N \ln N
    \end{equation}
    with high probability.
\end{lemma}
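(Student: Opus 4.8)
The plan is to show that all but a negligible number of the words $\underline{w}^{(i,N)}$ satisfy $-\ln\PX[\underline{w}^{(i,N)}]\leq(1+\tfrac{\epsilon}{2})\ln N$, and that the exceptional ones contribute only $o(N)$. By the parsing rule, each $\underline{w}^{(i,N)}$, being $\overline{w}^{(i,N)}$ with its last letter removed, does appear in $X_1^N$; moreover, since $\X$ is a finite-state HMP, any string $a$ with $\PX[a]>0$ satisfies $\PX[a]\geq\Exp{-C|a|}$ for a constant $C$ depending only on the representation, so $-\ln\PX[\underline{w}^{(i,N)}]\leq C|\underline{w}^{(i,N)}|<C\kappa\ln N$ by the length bound in Lemma~\ref{lem:a-priori}. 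Call an index $i$ \emph{light} if $-\ln\PX[\underline{w}^{(i,N)}]>(1+\tfrac{\epsilon}{2})\ln N$. The non-light indices contribute at most $(1+\tfrac{\epsilon}{2})c_N\ln N$ to the left-hand side of~\eqref{eq:ing-II}, while the light ones contribute at most $C\kappa(\ln N)\,L_N$, where $L_N$ is the number of light indices. It thus suffices to control $L_N$.

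First I would bound $\operatorname{Prob}$-expectation of $L_N$. Conditioning on $\Y=\mathbf{y}$ does not alter the law of $\X$, by independence. For each starting position $m$, the relevant word is the longest prefix of $y_m^{N}$ appearing in $X_1^N$, and since prefix probabilities are nonincreasing in length, this word is light precisely when $y_m^{m+\ell^*(m)-1}$ appears in $X_1^N$, where $\ell^*(m)$ is the least length with $\PX[y_m^{m+\ell^*(m)-1}]<N^{-(1+\epsilon/2)}$. A union bound over the at most $N$ positions at which a fixed string can occur gives $\operatorname{Prob}\{y_m^{m+\ell^*(m)-1}\text{ appears in }X_1^N\mid\mathbf{y}\}\leq N\PX[y_m^{m+\ell^*(m)-1}]<N^{-\epsilon/2}$; since the parse-start positions form a subset of $\{1,\dotsc,N\}$, summing over the $N$ possible values of $m$ yields $\mathbb{E}[L_N\mid\mathbf{y}]\leq N^{1-\epsilon/2}$.

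The main obstacle is upgrading this first-moment estimate to a high-probability bound, since Markov's inequality alone would give only polynomial decay with the small exponent $\epsilon/2$. The plan is to control the integer moments $\mathbb{E}[L_N^r\mid\mathbf{y}]$ of order $r\asymp\ln N$ using the upper decoupling inequality~\eqref{eq:UD}. Expanding $L_N^r$ as a sum over $r$-tuples of light positions and bounding the joint-occurrence probability of the corresponding strings by iterating~\eqref{eq:UD} across disjoint occurrence blocks produces a factor $\Exp{k(r-1)}$ multiplying the product of the individual occurrence probabilities; the configurations with overlapping occurrences number a factor $O(r^2\kappa(\ln N)/N)$ fewer and are absorbed. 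This yields $\mathbb{E}[L_N^r\mid\mathbf{y}]\leq\Exp{k(r-1)}(N^{1-\epsilon/2})^r(1+o(1))$, and Markov at the level of the $r$-th moment then gives $\operatorname{Prob}\{L_N>2\Exp{k}N^{1-\epsilon/2}\mid\mathbf{y}\}\leq\Exp{-k}2^{-r}$, which is $O(N^{-2})$ for $r=\lceil 2\log_2 N\rceil$; the bottleneck is precisely this decoupling-based joint-occurrence estimate together with the verification that overlaps are lower-order. Averaging over $\mathbf{y}$ preserves the bound. Finally, on the high-probability event from Lemma~\ref{lem:a-priori} one has $c_N\geq N/(\kappa\ln N)$, so the light contribution $C\kappa(\ln N)L_N=O(N^{1-\epsilon/2}\ln N)=o(c_N\ln N)$ stays below $\tfrac{\epsilon}{2}c_N\ln N$, and adding the two contributions gives~\eqref{eq:ing-II}.
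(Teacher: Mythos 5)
Your decomposition into ``light'' and ``non-light'' indices, the deterministic bound $-\ln\PX[a]\leq C|a|$ for finite-state HMPs, and the first-moment estimate $\mathbb{E}[L_N\mid\mathbf{y}]\leq N^{1-\epsilon/2}$ are all sound, and this is a genuinely different route from the paper's: the paper never counts bad words, but instead takes a union bound over the possible parse structures $(c,(L_i)_{i=1}^c)$, applies the upper decoupling inequality~\eqref{eq:UD} to $\PY$ to factorize over the blocks of $\Y$, uses the single estimate $\operatorname{Prob}(\{-\ln\PX[Y_{L_i}^{L_{i+1}-2}]\geq t_i\}\cap B_i)\leq N\Exp{-t_i}$, and closes with a Chernoff bound beating the Stirling count~\eqref{eq:choices}. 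However, your argument has a genuine gap exactly at the step you identify as the bottleneck.

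The problem is the claim that overlapping occurrence configurations are ``absorbed'' because they number a factor $O(r^2\kappa(\ln N)/N)$ fewer. The count is right, but the probability bound per overlapping configuration is much weaker than for disjoint ones: once two prescribed occurrences overlap, you cannot factor out both $\PX[a_{m_{j_1}}]$ and $\PX[a_{m_{j_2}}]$; the generic bound is obtained by discarding one of the two constraints, which costs a factor $1/\PX[a_{m_j}]$. Since light strings satisfy $\PX[a_{m_j}]<N^{-(1+\epsilon/2)}$, this lost factor exceeds $N^{1+\epsilon/2}$ and overwhelms the $O(r^2(\ln N)/N)$ saving in the count. Concretely, already at $r=2$ the union bound over overlapping position pairs contributes on the order of $N^{2}\cdot N(\ln N)\cdot N^{-(1+\epsilon/2)}=N^{2-\epsilon/2}\ln N$ to $\mathbb{E}[L_N^{2}\mid\mathbf{y}]$, which dominates the claimed $(N^{1-\epsilon/2})^{2}=N^{2-\epsilon}$. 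The correlations are not an artifact of the union bound: if $a_{m_2}$ happens to be a substring of $a_{m_1}$ (which can occur for well-separated $m_1,m_2$ when $\mathbf{y}$ has repetitions), the two lightness events essentially coincide and the pair contributes $\asymp N\PX[a_{m_1}]$ rather than $\asymp N^2\PX[a_{m_1}]\PX[a_{m_2}]$. So the asserted moment bound $\mathbb{E}[L_N^r\mid\mathbf{y}]\leq\Exp{k(r-1)}(N^{1-\epsilon/2})^r(1+o(1))$ is not established by the sketch, and first-moment Markov alone only yields polynomial decay of order $N^{-\epsilon/2}$, which is too slow for the Borel--Cantelli argument. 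Repairing this would require exploiting the consistency constraints on overlaps and controlling repetitions within $\mathbf{y}$\,---\,a substantial piece of work absent from the proposal. (A minor additional point: decoupling across $r$ disjoint blocks separated by free gaps costs $\Exp{2k(r-1)}$ rather than $\Exp{k(r-1)}$; this is harmless but should be tracked.)
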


\begin{proof}[Proof sketch]
    Fix $\epsilon \in (0,1)$ and let $\lambda \geq 2$ be arbitrary for the time being. By Lemma~\ref{lem:a-priori}, it is harmless to restrict our attention to the event that~\eqref{eq:UB on words}--\eqref{eq:LB on words} hold.
    With the intent of using a union bound over the possible choices of values~$c$ for $c_N$, and then the ways of choosing the starting index $L_i$ of each $\underline{w}^{(i,N)}$ for $i=1,\dots ,c$, we seek an upper bound on
    \begin{align*}
    \operatorname{Prob}\left(\left\{-\sum_{i=1}^{c} \ln\PX[Y_{L_i}^{L_{i+1}-2}]\geq (1+\epsilon)c\ln N\right\}\cap\bigcap_{i=1}^{c} B_{i}\right)
    \end{align*}
    where $B_i$ is used as the shorthand 
    \[
        B_{i}:=\left\{W(Y_{L_i}^{L_{i+1}-2},\X)<N\right\}
    \]
    with the convention that $L_1=1$ and $L_{c+1} = N+1$ .
    
    Now, for any $(t_i)_{i=1}^{c}\in\mathbb{R}^c$, we can successively use the upper decoupling inequality~\eqref{eq:UD} for the law~$\PY$ to obtain
    \begin{multline}
        \operatorname{Prob}\left(\bigcap_{i=1}^{c}\left\{-\ln\PX[Y_{L_i}^{L_{i+1}-2}]\geq t_i\right\} \cap B_{i}\right) \\
        \leq \Exp{c k}\prod_{i=1}^{c}\operatorname{Prob}\left(\left\{-\ln\PX[Y_{L_i}^{L_{i+1}-2}]\geq t_i\right\} \cap B_{i}\right). 
        \label{eq:meas}
    \end{multline}
    By a union bound and shift-invariance, we also have that
    \begin{multline*}
       \operatorname{Prob}\left(\left\{-\ln\PX[Y_{L_i}^{L_{i+1}-2}]\geq t_i\right\} \cap B_{i}\right)\\
        \leq \sum_{\substack{a\in\cA^{L_{i+1}-L_i-1},\\ -\ln \PX[a]\geq t_i}}\PY[a]\;N\PX[a]\leq N\Exp{-t_i}.
    \end{multline*}
    This shows that the measure defined by the left-hand side of inequality~\eqref{eq:meas} is bounded above by a product measure on $\mathbb{R}^{c}$, with each marginal having a well-defined moment-generating function on $(0,1)$. Using a Chebyshev-like bound for the function $t\mapsto \Exp{(1+\epsilon)^{-1/2}t}$, we have
    \begin{multline*}
    \operatorname{Prob}\left(\left\{-\sum_{i=1}^{c} \ln\PX[Y_{L_i}^{L_{i+1}-2}]\geq 
    t \right\} \cap\bigcap_{i=1}^{c} B_{i}\right)\\
    \leq\Exp{ck-(1+\epsilon)^{-\frac{1}{2}}t+c \ln N + c(2 - \ln \epsilon )}.
    \end{multline*} 
    With $t=(1+\epsilon)c\ln N$, we obtain an upper bound of the form $\exp\left(-\delta_\epsilon  c\ln N\right)$ for $N$ large enough and some $\delta_\epsilon >0$. But since $c \ln N \geq \kappa^{-1} N$ by~\eqref{eq:UB on words}, this in fact yields a bound of the form $\exp(-\delta'_\epsilon N)$ for $N$ large enough and some $\delta'_\epsilon >0$. 
    
    To conclude, we want to use a union bound over the choices of~$c$ and then the choices of~$(L_i)_{i=1}^{c}$.
   Since the number of choices is bounded by $\lambda^{-1}N$ times
    \begin{align}
     \label{eq:choices}
        \binom{N}{\lambda^{-1}N}
        <\Exp{\frac{1+\ln \lambda}{\lambda} N}
    \end{align}
    for $N$ large enough\,---\,thanks to Stirling's formula\,---, the union bound is indeed successful in showing that~\eqref{eq:ing-II} occurs with high probability, provided that $\lambda$ is large enough that $1+\ln \lambda <  \delta'_\epsilon \lambda$. 
\end{proof}

\begin{lemma}
\label{lem:ing-IV}
    For every $\epsilon > 0$,
    \[
        (N-c_N)(D_-\bar{q}(0)-\epsilon)
        \leq 
        -\sum_{i =1}^{c_N} \ln \PX[\underline{w}^{(i,N)}] 
    \]
    with high probability.
\end{lemma}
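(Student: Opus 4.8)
The plan is to mirror the union-bound structure of the proof of Lemma~\ref{lem:ing-II}, but to extract a matching \emph{lower} bound on the accumulated cross-entropy cost from the left derivative of the pressure~$\bar q$ rather than from a waiting-time constraint. By Lemma~\ref{lem:a-priori} it is harmless to work on the high-probability event that \eqref{eq:UB on words}--\eqref{eq:LB on words} hold, so that $c_N \le \lambda^{-1}N$ and every $\underline{w}^{(i,N)}$ has length at least $\lambda$, where $\lambda$ will be chosen large at the very end. With the intent of a union bound over the value $c$ of $c_N$ and the starting indices $(L_i)_{i=1}^{c}$ of the words (so that $\underline{w}^{(i,N)} = Y_{L_i}^{L_{i+1}-2}$, with $L_1 = 1$, $L_{c+1} = N+1$, and lengths $m_i := L_{i+1}-L_i-1$ summing to $N-c$), I would bound, for each fixed configuration, the probability of the ``small-sum'' event $\{-\sum_{i}\ln\PX[Y_{L_i}^{L_{i+1}-2}] < (N-c)(D_-\bar q(0)-\epsilon)\}$. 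Unlike in Lemma~\ref{lem:ing-II}, no waiting-time (membership-in-$\X$) constraint is needed here: this is a pure large-deviation lower bound under~$\PY$.

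First I would apply a Chernoff bound with parameter $\theta = -\alpha > 0$: the small-sum event forces $\prod_i \PX[Y_{L_i}^{L_{i+1}-2}]^{\theta} > \Exp{-\theta(N-c)(D_-\bar q(0)-\epsilon)}$, so Markov's inequality bounds its probability by $\Exp{\theta(N-c)(D_-\bar q(0)-\epsilon)}\,\mathbb{E}_{\PY}\!\big[\prod_i \PX[Y_{L_i}^{L_{i+1}-2}]^{\theta}\big]$. Because the integrand depends only on the letters inside the words, I would sum out the $c-1$ single-letter gaps and apply the upper decoupling inequality~\eqref{eq:UD} for~$\PY$ once per gap (exactly as in~\eqref{eq:meas}), using shift-invariance to absorb each freed gap letter. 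This factors the expectation as $\mathbb{E}_{\PY}[\cdots] \le \Exp{ck}\prod_i \sum_{a\in\cA^{m_i}} \PY[a]\,\PX[a]^{\theta} = \Exp{ck}\exp\big(\sum_i q_{m_i}(-\theta)\big)$, recognizing $\sum_{a\in\cA^{m_i}}\PY[a]\,\PX[a]^\theta = \Exp{q_{m_i}(-\theta)}$ from~\eqref{eq:qn-def}. Since each $m_i \ge \lambda$, the limsup definition~\eqref{eq:qbar-def} gives $q_{m_i}(-\theta) \le m_i(\bar q(-\theta)+\delta)$ for any fixed small $\delta>0$ once $\lambda$ is large, whence $\sum_i q_{m_i}(-\theta) \le (N-c)(\bar q(-\theta)+\delta)$.

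The crux is then a convexity argument for the left derivative (see Figure~\ref{fig:q-and-h}). Collecting the exponents, the per-configuration bound reads $\Exp{ck}\exp\big((N-c)[\theta(D_-\bar q(0)-\epsilon)+\bar q(-\theta)+\delta]\big)$. Because $\bar q$ is convex with $\bar q(0)=0$, the secant slopes $-\bar q(-\theta)/\theta$ increase to $D_-\bar q(0)$ as $\theta\downarrow 0$; hence I may fix $\theta>0$ small enough that $\bar q(-\theta) \le -\theta(D_-\bar q(0)-\tfrac{\epsilon}{4})$ --- geometrically, the point $(-\theta,\bar q(-\theta))$ lies below the dashed line of slope $D_-\bar q(0)-\tfrac{\epsilon}{4}$. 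Substituting and taking $\delta \le \theta\epsilon/4$ makes the bracket at most $-\theta\epsilon/2$, so the per-configuration probability is at most $\Exp{ck}\Exp{-\tfrac{\theta\epsilon}{2}(N-c)}$.

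Finally I would run the union bound. Using $N-c \ge (1-\lambda^{-1})N$, the factor $\Exp{ck}\le \Exp{k\lambda^{-1}N}$, and the count of configurations bounded by $\lambda^{-1}N$ times the binomial estimate~\eqref{eq:choices}, the total probability of the small-sum event is at most $\lambda^{-1}N\exp\big(N[\tfrac{1+\ln\lambda}{\lambda}+\tfrac{k}{\lambda}-\tfrac{\theta\epsilon}{2}(1-\tfrac1\lambda)]\big)$. For $\lambda$ large the bracket is negative, so this decays faster than any inverse power of~$N$, giving the claim with high probability. The main obstacle is the convexity step: since $\bar q$ need not be differentiable at the origin, one must work with the left derivative and verify that optimizing $\theta$ still yields a strictly negative exponent, while simultaneously choosing $\lambda$ large enough to defeat both the combinatorial entropy~\eqref{eq:choices} and the decoupling penalty $\Exp{ck}$ --- a balance in which $\theta$, $\delta$ and $\lambda$ must be selected in the right order.
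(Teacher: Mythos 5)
Your proposal is correct and follows essentially the same route as the paper's proof: a union bound over $c$ and the starting indices, a Chernoff/Chebyshev-type bound with a negative tilting parameter (your $\theta=-\alpha$), the upper decoupling inequality~\eqref{eq:UD} to factor the expectation into $\Exp{ck}\prod_i\Exp{q_{m_i}(-\theta)}$, the convexity/secant-slope argument at $D_-\bar q(0)$, and the choice of $\lambda$ large enough to beat the combinatorial count~\eqref{eq:choices}. The order in which you fix $\theta$, $\delta$, and $\lambda$ matches what the argument requires, so no further comment is needed.
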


\begin{proof}[Proof sketch]
    Let $\epsilon>0$ and $\lambda\geq2$ be arbitrary. By Lemma~\ref{lem:a-priori}, we can assume the length bounds~\eqref{eq:UB on words}--\eqref{eq:LB on words}. We use the notation from Lemma~\ref{lem:ing-II}. Note that, by a union bound over the possible choices for $c$ and $(L_i)_{i=1}^c$ under the constraint that $L_{i+1}-L_i-1\geq \lambda$, it suffices to provide an exponential upper bound, made uniform in the choice of $c$, on the events
    \[
    \operatorname{Prob}\left\{-\sum_{i=1}^{c} \ln\PX[Y_{L_i}^{L_{i+1}-2}]\leq (N-c)(D_-\bar q(0)-\epsilon)\right\}.
    \]
    Note that the moment-generating function of the logarithm of the marginal on the right-hand side of~\eqref{eq:meas} is precisely $\Exp{q_{L_{i+1}-L_i -1}(\alpha)}$; {\em cf.}~\eqref{eq:qn-def}.
    Hence, using~\eqref{eq:UD} and a Chebyshev-like inequality for the function $t\mapsto\Exp{-\alpha t}$ with $\alpha < 0$,
    \begin{multline*}
        \operatorname{Prob}\left\{-\sum_{i=1}^{c} \ln\PX[Y_{L_i}^{L_{i+1}-2}]\leq (N-c)(D_-\bar q(0)-\epsilon)\right\}\\
         \leq\Exp{ck-\alpha (N-c)(D_-\bar q(0)-\epsilon)+ \sum_{i=1}^c q_{L_{i+1}-L_i-1}(\alpha)}.
    \end{multline*} 
    By convexity of the pressure~$\bar{q}$, we can take $\alpha<0$ small enough that 
    $
    \bar q(\alpha)< \alpha(D_-\bar q(0)-\tfrac \epsilon 4)
    $;
    see Figure~\ref{fig:q-and-h}. Now, pick $\lambda$ large enough that~\eqref{eq:qbar-def} and~\eqref{eq:LB on words} guarantee
    \[
    \frac{q_{L_{i+1}-L_i-1}(\alpha)}{L_{i+1}-L_i-1}<\bar q(\alpha)-\frac{\alpha\epsilon}4
    \]
    for each $i=1,\dots,c$.
    Recalling that $\sum_{i=1}^c L_{i+1}-L_i-1=N-c$ and that~\eqref{eq:LB on words} provides the bound~$c\leq\lambda^{-1}N$, we can conclude as in Lemma~\ref{lem:ing-II} using~\eqref{eq:choices} with~$\lambda$ large enough.
\end{proof}

Let $\epsilon \in (0,1)$ be arbitrary. Combining the last conclusion of Lemma~\ref{lem:a-priori} with $\lambda > (k+1)\epsilon^{-1}$ and the BC lemma, we deduce that both $c_N$ and $kc_N$ are almost surely eventually bounded by $N\epsilon$. Hence, by Lemmas~\ref{lem:cons-KSET},~\ref{lem:ing-I} and \ref{lem:ing-III} and the BC lemma,
\begin{align*}
    \limsup_{N\to\infty} \frac{c_N \ln N}{N-c_N} \leq \left(\frac{1}{1-\epsilon}\right)^2 \left(\Sc(\PY|\PX)  + \epsilon\right),
\end{align*}
almost surely. By Lemmas~\ref{lem:left-der},~\ref{lem:ing-II} and~\ref{lem:ing-IV} and the BC lemma, 
\begin{align*}
    \liminf_{N\to\infty} \frac{c_N\ln N}{N-c_N} \geq \frac{1}{1+\epsilon} (\Sc(\PY|\PX)-\epsilon),
\end{align*}
almost surely.
To deduce Theorem~\ref{thm:ieee-main}, take $\epsilon \to 0$ along some sequence and use countable additivity of probability.

\section{Numerical experiments}
\label{sec:num}

\textcolor{black}{In Figure~\ref{fig:hmm_plot2}, we compute the longest-match length estimator, the original ZM estimator and the mZM estimator for some pair of HMPs on~$\cA=\{0, 1\}$. Then, to estimate the root-mean-square error (RMSE), we compare the estimator to~\eqref{eq:cross-SMB} at $N = 2^{20}$\,---\,which is only possible in this case as we know the process.}

\begin{figure}[h]
    \centering\includegraphics{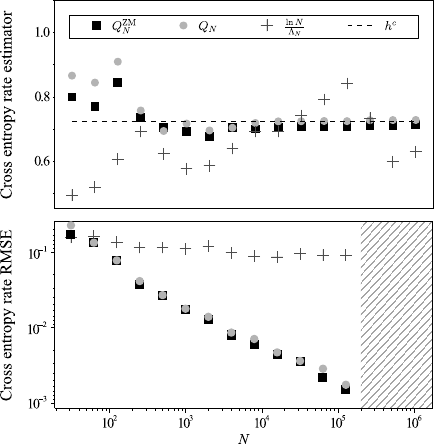}
    \caption{ \color{black} For single realizations of HMPs $\bf{X}$ and $\bf{Y}$, we plot different estimators of $\Sc(\PY|\PX)$ up to $N = 2^{20}$ (Top). Over 32 realizations, we plot our estimation of the RMSE of the different estimators up to $N = 2^{17}$ (Bottom).}
    \label{fig:hmm_plot2}
\end{figure}

 While our proofs provide no rate of convergence, these numerical experiments\,---\,and others~\cite{BGPRb}\,---\,suggest a relatively rapid \textcolor{black}{polynomial} convergence.
 Importantly, we see a significant performance advantage in the ZM-type estimators compared to the more routine longest-match length estimator. 

\section{Conclusion}

We have proved strong consistency of a slight modification of the ZM estimator of cross entropy rates between HMPs.
We emphasize that while our numerical experiments suggest little to no impact on concrete performance on HMPs, the modification of the ZM estimator allows for proofs at a level of generality that was previously inaccessible~\cite{MZ93,BGPR}.

\paragraph*{Acknowledgements} {The work of NB and RR was partially funded by the \emph{Fonds de recherche du Qu\'ebec\,---\,Nature et technologies} (FRQNT) and by the Natural Sciences and Engineering Research Council of Canada (NSERC). Part of the work of RG was partially funded by the {Rubin Gruber Science Undergraduate Research Award} and {Axel W Hundemer}. The work of GP was supported by the CY Initiative of Excellence through the grant Investissements d'Avenir ANR-16-IDEX-0008, and was done under the auspices of the \emph{Gruppo Nazionale di Fisica Matematica} (GNFM) section of the \emph{Istituto Nazionale di Alta Matematica} (INdAM) while GP was a post-doctoral researcher at University of Milano-Bicocca (Milan, Italy). }

{\footnotesize

}

\end{document}